\theoremstyle{plain}
\newtheorem{lem}{\protect\lemmaname}
  \newenvironment{proof}[1][\proofname]{\par
    \normalfont\topsep6\p@\@plus6\p@\relax
    \trivlist
    \itemindent\parindent
    \item[\hskip\labelsep
          \scshape
      #1]\ignorespaces
  }{%
    \endtrivlist\@endpefalse
  }
  \providecommand{\proofname}{Proof}
\theoremstyle{plain}
\newtheorem{thm}{\protect\theoremname}
\providecommand{\lemmaname}{Lemma}
\providecommand{\theoremname}{Theorem}
\begin{document}
\title{The Physics of Spontaneous Parity-Time Symmetry Breaking in the Kelvin-Helmholtz Instability}
\author{Yichen Fu}
\author{Hong Qin}
\email{hongqin@princeton.edu}

\affiliation{Princeton Plasma Physics Laboratory, Princeton, NJ, 08543, USA}
\begin{abstract}
We show that the dynamics, in particular the Kelvin-Helmholtz (KH) instability, of an inviscid fluid with velocity shear admits Parity-Time (PT) symmetry, which provides a physical explanation to the well-known observation that the spectrum of the perturbation eigenmodes of the system is symmetric with respect to the real axis. It is found that the KH instability is triggered when and only when the PT symmetry is spontaneously broken. The analysis of PT symmetry also reveals that the relative phase between parallel velocity and pressure perturbations needs to be locked at $\pi/2$ when the instability is suppressed. 
\end{abstract}
\maketitle
A keen interest in parity-time (PT)-symmetric systems was initiated in 1998 when Bender et al. \citep{bender1998real} found that a large class of non-Hermitian Hamiltonians $\mathcal{H}$ exhibits entirely real spectrum, provided that the Hamiltonians have unbroken PT symmetry. Based on this observation, the requirement that a quantum operator corresponding to an observable must be Hermitian could be relaxed to a more physical condition of being PT-symmetric \citep{bender2002complex,bender2003must}. After being discovered in optical systems both theoretically \citep{makris2008beam} and experimentally \citep{guo2009observation}, PT symmetry has been wildly studied in many branches of quantum physics \citep{peng2016anti,gao2015observation,kreibich2014realizing} and classical physics \citep{auregan2017p,bender2013observation,bittner2012p,fleury2015invisible,qin2019kelvin}.

Kelvin-Helmholtz (KH) instability is a fluid instability occurring at the interface of two fluids with different velocities, or in a single fluid with velocity shear, i.e., a shear layer. In the classical theory of the KH instability, it is known that \citep{drazin1966hydrodynamic} the spectrum of the perturbation eigenmodes of an inviscid shear layer is symmetric with respect to the real axis. If $\omega$ is the eigen-frequency of the system, so is its complex conjugate $\bar{\omega}$. However, for a viscous shear layer, the eigenmode spectrum is usually not symmetric with respect to the real axis \citep{mack1976numerical}. The physical reason for these two different behaviors has not been identified. We demonstrate here that whether the spectrum is symmetric with respect to the real axis is determined by whether the system admits PT symmetry. For an inviscid shear layer, we show that the dynamics is PT-symmetric and thus its spectrum is symmetric with respect to the real axis. Identifying the PT symmetry admitted by the system reveals that the KH instability is triggered when and only when the PT symmetry is spontaneously broken. Furthermore, the analysis of PT symmetry also reveals that the relative phase between parallel velocity and pressure perturbations needs to be locked at $\pi/2$ when the instability is suppressed. In contrast, the governing equation of a viscous shear layer is not PT-symmetric, or PT symmetry is explicitly broken \citep{brading2003symmetries}, and this is the physical reason why its spectrum is not symmetric with respect to the real axis.

To begin our discussion, we first provide a brief derivation of the governing equations of the KH instability. Starting from Rayleigh \citep{rayleigh1879stability}, the stability of shear layers has been studied with different shear velocity, temperature, viscosity and boundary conditions. Following Blumen and Drazin et al. \citep{blumen1970shear,blumen1975shear,drazin1977shear}, we consider the linear stability of a two-dimensional compressible inviscid shear layer with a uniform temperature. The system is governed by the fluid equations, 
\begin{align}
\partial_{t}\rho+\nabla\cdot(\rho\boldsymbol{v}) & =0,\label{mass_eq}\\
\rho\left(\partial_{t}+\boldsymbol{v}\cdot\nabla\right)\boldsymbol{v} & =-\nabla p,\label{momentum_eq}\\
\left(\partial_{t}+\boldsymbol{v}\cdot\nabla\right)\left(p/\rho^{\gamma}\right) & =0,\label{state_eq}
\end{align}
where $\rho$ and $p$ are the density and pressure of the fluid, $\boldsymbol{v}=(v_{x},v_{z})$ is the velocity field in the $(x,z)$ plane, and $\gamma$ is the ratio of specific heats.

The unperturbed equilibrium flow is $\boldsymbol{v}_{0}=(v_{0}(z),0)$, a shear flow moving in the $\hat{\boldsymbol{x}}$ direction with variation along the $\hat{\boldsymbol{z}}$ direction. The equilibrium pressure $p=p_{0}$ and equilibrium density $\rho=\rho_{0}$ are assumed to be constant. Consider a linear perturbation of the form 
\begin{align}
\begin{split}v_{x} & =v_{0}(z)+v_{1x}(t,x,z),\quad v_{z}=v_{1z}(t,x,z),\\
\rho & =\rho_{0}+\rho_{1}(t,x,z),\quad p=p_{0}+p_{1}(t,x,z).
\end{split}
\end{align}
Assuming $(v_{1x},v_{1z},p_{1})\sim\exp[\mathrm{i}(kx-\omega t)]$ for a wavelength $k\in\mathbb{R}$, the linearized equations can be written as 
\begin{align}
\mathrm{i}\omega\begin{pmatrix}v_{1x}\\
v_{1z}\\
p_{1}
\end{pmatrix}=\begin{pmatrix}\mathrm{i}kv_{0} & v_{0}' & \mathrm{i}k\\
0 & \mathrm{i}kv_{0} & \partial_{z}\\[5pt]
\dfrac{\mathrm{i}k}{M^{2}} & \dfrac{1}{M^{2}}\partial_{z} & \mathrm{i}kv_{0}
\end{pmatrix}\begin{pmatrix}v_{1x}\\
v_{1z}\\
p_{1}
\end{pmatrix},\label{linearized}
\end{align}
where $'$ denotes $\mathrm{d}/\mathrm{d}z$, and variables of the system have been normalized by the characteristic velocity $U$ and the characteristic scale length $L$. In Eq.\,(\ref{linearized}), $M\equiv U/\sqrt{\gamma p_{0}/\rho_{0}}$ is the Mach number. The boundary condition is $v_{1z}=p_{1}'=0$ at $z=\infty$ for unbounded fluids, or at $z=z_{\mathrm{b}}$ for fluids limited by rigid boundaries. When $M\to0$, the third row in Eq.\,(\ref{linearized}) recovers the incompressible equation $\nabla\cdot\boldsymbol{v}_{1}=0$ as a special case. The temporal stability of the shear layer is determined by solving the eigenvalue problem of Eq.\,(\ref{linearized}) with given $M,k$ and proper boundary conditions. If there exists an eigen-frequency $\omega=\omega_{\mathrm{R}}+\mathrm{i}\omega_{\mathrm{I}}$ with $\omega_{\mathrm{I}}>0$, then the KH instability occurs, and the profile $v_{0}(z)$ is unstable.

Equation (\ref{linearized}) assumes the form of the Schr\"{o}dinger equation $\mathcal{H}\psi=\omega\psi$ with 
\begin{align}
 & \mathcal{H}=\begin{pmatrix}kv_{0} & -\mathrm{i}v_{0}' & k\\
0 & kv_{0} & -\mathrm{i}\partial_{z}\\[5pt]
\dfrac{k}{M^{2}} & \dfrac{-\mathrm{i}}{M^{2}}\partial_{z} & kv_{0}
\end{pmatrix},\label{H-1}\\
 & \psi=(v_{1x},v_{1z},p_{1})^{\mathrm{T}}.
\end{align}
However, the Hamiltonian $\mathcal{H}$ here is not Hermitian. As we will see, $\mathcal{H}$ is PT-symmetric instead.

Now we develop the mathematical tools for analyzing the physics of PT symmetry contained in Eq.\,(\ref{H-1}). In general, a Hamiltonian $\mathcal{H}$ is PT-symmetric if 
\begin{align}
\big(\mathcal{P}\mathcal{T}\big)\mathcal{H}\big(\mathcal{P}\mathcal{T}\big)=\mathcal{H},\label{pt_condition}
\end{align}
where $\mathcal{P}$ is a linear operator satisfying $\mathcal{P}^{2}=1$, and $\mathcal{T}$ is the complex conjugate operator \citep{bender2007making}. In previous studies, two types of PT-symmetric Hamiltonians have been extensively studied. The first type consists of scalar differential operators \citep{bender1998real,makris2008beam,wadati2008construction,bender1999pt}, e.g., $\mathcal{H}=\partial_{x}^{2}+(\mathrm{i}x)^{N}$. The second type is linear maps of finite-dimensional complex vector spaces, which can be expressed as finite-dimensional square matrices \citep{el2007theory,guo2009observation,bender2013observation,qin2019kelvin}, for example, 
\begin{equation}
\mathcal{H}=\begin{pmatrix}a+\mathrm{i}b & g\\
g & a-\mathrm{i}b
\end{pmatrix}.
\end{equation}

The Hamiltonian operator (\ref{H-1}) studied here represents a new type, which takes the form of matrix differential operators, i.e., matrices whose elements are differential operators. This type of Hamiltonians was encountered in the study of Bose-Einstein condensates \citep{kartashov2014symmetric}, where Pauli matrices were used to describe the spin degree of freedom. An $m$-th order matrix differential operator can be written as 
\begin{align}
\mathcal{H}=\sum_{i=0}^{m}N_{i}(z)\dfrac{\partial^{i}}{\partial z^{i}},\label{matrix_differential_operator}
\end{align}
where $N_{i}$ are $n\times n$ complex matrices. The corresponding state is $\psi(z)=(\psi_{1},\cdots,\psi_{n})^{\mathrm{T}}$. This new type is a marriage between the two types of Hamiltonians described above. When $N_{i}$ are $1\times1$ matrices, $\mathcal{H}$ reduces to the first type described above; when $m=0$, $\mathcal{H}$ reduces to the second type.

For a Hamiltonian of the form of Eq.\,(\ref{matrix_differential_operator}), the PT-symmetry condition (\ref{pt_condition}) becomes 
\begin{align}
\begin{split} & \sum_{i=0}^{m}\Big[\big(\mathcal{P}\mathcal{T}\big)N_{i}\big(\mathcal{P}\mathcal{T}\big)\Big]\Big[\big(\mathcal{P}\mathcal{T}\big)\dfrac{\partial^{i}}{\partial z^{i}}\big(\mathcal{P}\mathcal{T}\big)\Big]=\sum_{i=0}^{m}N_{i}\dfrac{\partial^{i}}{\partial z^{i}}.\end{split}
\label{MDO_pt_condition}
\end{align}
If the parity operator $\mathcal{P}$ is an $n\times n$ constant matrix in $\mathbb{C}^{n}$, then the $\mathcal{P}\mathcal{T}$ operator commutes with the differential operators $\partial^{i}/\partial z^{i}$, and Eq.\,(\ref{MDO_pt_condition}) can be simplified to 
\begin{align}
\big(\mathcal{P}\mathcal{T}\big)N_{i}\big(\mathcal{P}\mathcal{T}\big)=N_{i},\quad i=0,1,\cdots,m.
\end{align}
This means that matrix differential operator $\mathcal{H}$ is PT-symmetric if all matrices $N_{i}$ are PT-symmetric with respect to the same $\mathcal{P}\mathcal{T}$ operator. We will focus on this special case in the present study.

The most important and well-known properties of a PT-symmetric Hamiltonian $\mathcal{H}$ \citep{bender2010pt,bender2007making} can be summarized as follows. (i) Its spectrum is symmetric with respect to the real axis, i.e., if $\omega$ is an eigenvalue of $\mathcal{H}$, so is its complex conjugate $\bar{\omega}$. (ii) If every eigenfunction $\psi$ of $\mathcal{H}$ is also an eigenfunction of the $\mathcal{P}\mathcal{T}$ operator, i.e., $\mathcal{P}\mathcal{T}\psi=\lambda\psi$ for some $\lambda$, then we say that PT symmetry is unbroken. In this case, the spectrum of $\mathcal{H}$ is always real. (iii) If some eigenfunctions of $\mathcal{H}$ are not the eigenfunctions of the $\mathcal{P}\mathcal{T}$ operator, we say that PT symmetry is spontaneously broken. The Hamiltonian $\mathcal{H}$ has a complex eigenvalue with $\omega_{\mathrm{I}}\neq0$ when and only when PT symmetry is spontaneously broken. In this case, there must exist an unstable eigenmode with $\omega_{I}>0$ due to the symmetry property of the spectrum.

Now we prove an important result that if the Hamiltonian specified by the matrix differential operator (\ref{matrix_differential_operator}) admits PT symmetry, then under proper change of basis for the state vectors the matrices $N_{i}$ can always be transformed into real matrices $\tilde{N}_{i}$. First, we need the following lemma adapted from Wigner's theory on normal forms for anti-unitary operators \citep{wigner1960normal}. For self-consistency and future reference, a complete proof is given here. 
\begin{lem}
\label{lemma} For an anti-unitary operator $A$ defined in a finite-dimensional complex Hilbert space, if $A^{2}=1$, then it is always possible to construct a set of complete orthonormal basis $\{\alpha_{i}\}$ satisfying $A\alpha_{i}=\alpha_{i}$. 
\end{lem}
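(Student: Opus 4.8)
The plan is to argue by induction on the complex dimension $n$ of the Hilbert space, peeling off one fixed vector at a time. Throughout I would rely only on the defining properties of an anti-unitary operator: $A$ is anti-linear, $A(\lambda x)=\bar{\lambda}Ax$, and it conjugates the inner product, $\langle Ax,Ay\rangle=\overline{\langle x,y\rangle}$. The goal of each step is to produce a unit vector fixed by $A$ and then to recurse on its orthogonal complement.

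For the base case, which is also the heart of each inductive step, I first need to manufacture a single nonzero vector fixed by $A$ out of an arbitrary $v\neq 0$. The natural move is to symmetrize: since $A^{2}=1$, the vector $v+Av$ satisfies $A(v+Av)=Av+v$, so it is fixed. The only subtlety is that $v+Av$ may vanish, which occurs precisely when $Av=-v$; in that case I would instead use $\mathrm{i}v$, for which anti-linearity gives $A(\mathrm{i}v)=-\mathrm{i}Av=\mathrm{i}v$, again fixed and nonzero. Normalizing yields a unit vector $\alpha_{1}$ with $A\alpha_{1}=\alpha_{1}$.

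The key structural step is to show that the orthogonal complement $W=\alpha_{1}^{\perp}$ is invariant under $A$, so that the induction can proceed. For any $x\in W$ I would compute $\langle Ax,\alpha_{1}\rangle=\langle Ax,A\alpha_{1}\rangle=\overline{\langle x,\alpha_{1}\rangle}=0$, using $A\alpha_{1}=\alpha_{1}$ together with the conjugation property; hence $Ax\in W$. The restriction $A|_{W}$ is again anti-unitary with $(A|_{W})^{2}=1$, and $W$ has complex dimension $n-1$, so the induction hypothesis supplies an orthonormal set $\{\alpha_{2},\dots,\alpha_{n}\}\subset W$ of $A$-fixed vectors. Adjoining $\alpha_{1}$ gives a complete orthonormal basis fixed by $A$, closing the induction; the dimension $n=1$ case is exactly the base construction above.

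I expect the main obstacle to be the base case, specifically handling the degenerate possibility $v+Av=0$ cleanly. This is precisely the point where anti-linearity—the factor $\bar{\lambda}$ in $A(\lambda x)=\bar{\lambda}Ax$—is indispensable, since it is what lets multiplication by $\mathrm{i}$ flip the eigenvalue from $-1$ back to $+1$; a linear operator would afford no such escape. Everything else reduces to the one-line verification that anti-unitarity preserves orthogonality to a fixed vector, which is what makes $W$ an invariant subspace and powers the recursion.
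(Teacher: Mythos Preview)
Your proposal is correct and follows essentially the same route as the paper: construct a fixed unit vector via the symmetrization $v+Av$ (falling back to $\mathrm{i}v$ when $Av=-v$), then use anti-unitarity together with $A\alpha_{1}=\alpha_{1}$ to pass to the orthogonal complement. The only cosmetic difference is that you phrase the recursion as an induction on $\dim$ after showing $\alpha_{1}^{\perp}$ is $A$-invariant, whereas the paper iterates Gram--Schmidt style, verifying orthogonality of each new $\alpha_{k}$ directly.
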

\begin{proof}
First, select an arbitrary unit vector $\beta_{1}$ in the Hilbert space. If $A\beta_{1}\neq-\beta_{1}$, define a unit vector $\alpha_{1}=c_{1}(\beta_{1}+A\beta_{1})$, where $c_{1}$ is a real normalization constant. We have
\begin{align}
A\alpha_{1}=Ac_{1}(\beta_{1}+A\beta_{1})=c_{1}(A\beta_{1}+\beta_{1})=\alpha_{1}.
\end{align}
If $A\beta_{1}=-\beta_{1}$, define $\alpha_{1}=\mathrm{i}\beta_{1}$ and we have: 
\begin{align}
A\alpha_{1}=A\,\mathrm{i}\beta_{1}=-\mathrm{i}A\beta_{1}=\mathrm{i}\beta_{1}=\alpha_{1}.
\end{align}
Next, choose another unit vector in $\beta_{2}$ that is orthogonal to $\alpha_{1}$, i.e., $\beta_{2}^{\dagger}\alpha_{1}=0$, where $\dagger$ is the conjugate transpose operation. Construct a unit vector $\alpha_{2}$ from $\beta_{2}$ using the same procedure above. If $A\beta_{2}\neq-\beta_{2}$ then let $\alpha_{2}=c_{2}(\beta_{2}+A\beta_{2})$, and 
\begin{align}
\begin{split}\alpha_{1}^{\dagger}\alpha_{2} & =c_{2}\alpha_{1}^{\dagger}(\beta_{2}+A\beta_{2})=c_{2}\alpha_{1}^{\dagger}A\beta_{2}\\
 & =c_{2}\overline{(A\alpha_{1})^{\dagger}A^{2}\beta_{2}}=c_{2}\beta_{2}^{\dagger}\alpha_{1}=0,
\end{split}
\end{align}
where use is made of the anti-unitarity of $A$ in the third equal sign. If $A\beta_{2}=-\beta_{2}$ then let $\alpha_{2}=i\beta_{2}$, and obviously $\alpha_{1}^{\dagger}\alpha_{2}=0$. Thus, $\alpha_{2}$ is orthogonal to $\alpha_{1}$. The amplitude of $\alpha_{2}$ can be normalized to one by multiplying a real constant. Repeating the same procedure described above, we can build a set of complete orthonormal basis $\{\alpha_{i}\}$ satisfying $A\alpha_{i}=\alpha_{i}$. 
\end{proof}
\begin{thm}
\label{theorem} For a matrix differential operator $\mathcal{H}$ specified by Eq.\,(\ref{matrix_differential_operator}), if $[N_{i},\mathcal{P}\mathcal{T}]=0$ for all $i$, then the following statements hold.

(a) There exists a unitary matrix $O$ in $\mathbb{C}^{n}$ such that in terms of the transformed state vector $\phi=O\psi$, the Schr\"{o}dinger equation is $\tilde{\mathcal{H}}\phi=\omega\phi$ with $\tilde{\mathcal{H}}=\sum_{i=0}^{m}\tilde{N}_{i}\dfrac{\partial^{i}}{\partial z^{i}}$, and $\tilde{N}_{i}=ON_{i}O^{\dagger}$ are real matrices.

(b) The eigenvalue system $\tilde{\mathcal{H}}\phi=\omega\phi$ can be reduced to one ODE, 
\begin{align}
\sum_{i=1}^{n}a_{i}(\omega)\dfrac{\mathrm{d}^{i}}{\mathrm{d}z^{i}}\phi_{l}=0,\label{real_secular_ode}
\end{align}
in terms of one component $\phi_{l}$ of the state vector $\phi$, and the coefficients $a_{i}(\omega)$ are real-value functions in the sense that $a_{i}(\omega)\in\mathbb{R}$ for $\omega\in\mathbb{R}$, or equivalently, 
\begin{align}
a_{i}(\bar{\omega})=\overline{a_{i}(\omega)}.\label{coefficient_condition}
\end{align}
\end{thm}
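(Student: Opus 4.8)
The plan is to apply Lemma~\ref{lemma} to the antiunitary operator $A=\mathcal{P}\mathcal{T}$ acting on the finite-dimensional space $\mathbb{C}^{n}$ that carries the coefficient matrices $N_{i}$. First I would note that $A$ is antiunitary (since $\mathcal{P}$ is unitary and $\mathcal{T}$ is complex conjugation) and satisfies $A^{2}=1$, so the Lemma supplies an orthonormal basis $\{\alpha_{j}\}$ of $\mathbb{C}^{n}$ with $A\alpha_{j}=\alpha_{j}$. Collecting these vectors into the unitary matrix $U=(\alpha_{1},\dots,\alpha_{n})$ and setting $O=U^{\dagger}$, the transformed state $\phi=O\psi$ is simply $\psi$ expressed in the $\alpha$-basis, and the transformed coefficients are $\tilde{N}_{i}=ON_{i}O^{\dagger}=U^{\dagger}N_{i}U$.

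The key observation for part (a) is that in the $\alpha$-basis the operator $\mathcal{P}\mathcal{T}$ reduces to plain complex conjugation of the coordinate vector: writing $\psi=\sum_{j}c_{j}\alpha_{j}$, antilinearity of $A$ together with $A\alpha_{j}=\alpha_{j}$ gives $A\psi=\sum_{j}\bar{c}_{j}\alpha_{j}$. Since $[N_{i},\mathcal{P}\mathcal{T}]=0$ by hypothesis, the matrix $\tilde{N}_{i}$ must commute with coordinate-wise conjugation; evaluating $\tilde{N}_{i}\bar{c}=\overline{\tilde{N}_{i}c}=\overline{\tilde{N}_{i}}\,\bar{c}$ for arbitrary $c$ forces $\tilde{N}_{i}=\overline{\tilde{N}_{i}}$, i.e.\ $\tilde{N}_{i}$ is real. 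Because $O$ is a constant matrix it commutes with every $\partial^{i}/\partial z^{i}$, so conjugating the Schr\"{o}dinger equation by $O$ yields $\tilde{\mathcal{H}}\phi=\omega\phi$ with $\tilde{\mathcal{H}}=\sum_{i}\tilde{N}_{i}\,\partial^{i}/\partial z^{i}$ and real $\tilde{N}_{i}$, establishing (a).

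For part (b) the plan is to eliminate all but one component of $\phi$ from the real system $(\sum_{i}\tilde{N}_{i}\partial_{z}^{i}-\omega)\phi=0$. Regarding this as an $n\times n$ matrix $L(z,\partial_{z};\omega)$ whose entries are linear differential operators in $z$, I would decouple a chosen component $\phi_{l}$ by successive elimination --- Gaussian elimination in the Ore ring of differential operators with smooth real coefficients, with $\omega$ treated as a parameter --- producing a single scalar ODE $\sum_{i}a_{i}\,\mathrm{d}^{i}\phi_{l}/\mathrm{d}z^{i}=0$. The decisive point is the bookkeeping of the $\omega$-dependence: $\omega$ enters $L$ only through the additive term $-\omega I$, with real (indeed unit) coefficient, while all remaining entries are real functions of $z$ inherited from the $\tilde{N}_{i}$. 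Since elimination uses only sums and (left/right) products of these operators, each coefficient $a_{i}$ is a polynomial in $\omega$ whose coefficients are real functions of $z$; a real polynomial satisfies $a_{i}(\bar{\omega})=\overline{a_{i}(\omega)}$, which is precisely (\ref{coefficient_condition}).

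The hard part will be the reduction in (b) rather than the reality claim. Collapsing a coupled higher-order system to a single scalar equation in one component is delicate because differential operators do not commute and one cannot naively divide; the clean justification is the Euclidean/Ore structure of the operator ring (or, equivalently, a cyclic-vector argument), which makes the elimination well defined away from the zeros of the successive pivot coefficients. Once that reduction is in place, the symmetry statement (\ref{coefficient_condition}) follows immediately from the real, linear-in-$\omega$ structure established in (a), so the genuine content of the theorem lies in the algebraic decoupling, not in the PT bookkeeping.
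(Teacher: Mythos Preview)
Your proposal is correct and follows the same route as the paper: invoke Lemma~\ref{lemma} to obtain a $\mathcal{P}\mathcal{T}$-fixed orthonormal basis, take $O$ to be the associated unitary, and verify that the conjugated coefficient matrices are real (the paper does this by the direct matrix-element computation $(\tilde{N}_{i})_{kl}=X_{k}^{\dagger}N_{i}X_{l}=\overline{(\tilde{N}_{i})_{kl}}$, which is equivalent to your observation that $\mathcal{P}\mathcal{T}$ reduces to coordinate-wise conjugation in the new basis). For part~(b) you are in fact more careful than the paper, which simply asserts that $n-1$ components can ``in principle'' be eliminated and that reality of the $\tilde{N}_{i}$ then forces condition~(\ref{coefficient_condition}), without the Ore-ring/cyclic-vector justification you sketch.
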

\begin{proof}
Notice that $\mathcal{P}\mathcal{T}$ is an anti-unitary operator in $\mathbb{C}^{n}$ and $(\mathcal{P}\mathcal{T})^{2}=1$. According to Lemma \ref{lemma}, we can choose a set of complete orthonormal basis $\{X_{i}\}$ for $\mathbb{C}^{n}$ satisfying $\mathcal{P}\mathcal{T}X_{i}=X_{i}$. Here, each $X_{i}$ is a constant column vector in $\mathbb{C}^{n}$ and $X_{i}^{\dagger}X_{j}=\delta_{ij}$. Let $O\equiv(X_{1},\cdots,X_{n})^{\dagger},$ which belongs to $U(n).$ Let the new state vector is $\phi=O\psi$. In terms of $\phi,$ the Schr\"{o}dinger equation is 
\begin{align}
\begin{split}\omega\phi & =\tilde{\mathcal{H}}\phi,\\
\tilde{\mathcal{H}} & \equiv O\left[\sum_{i=0}^{m}N_{i}\dfrac{\partial^{i}}{\partial z^{i}}\right]O^{\dagger}=\sum_{i=0}^{m}\tilde{N}_{i}\dfrac{\partial^{i}}{\partial z^{i}},\\
\tilde{N}_{i} & \equiv ON_{i}O^{\dagger}.
\end{split}
\end{align}
Following the procedure in \citep{bender2002generalized}, we can prove that $\tilde{N}_{i}$ are real as follows. Because $\mathcal{P}\mathcal{T}$ is anti-unitary and $N_{i}$ commutes with $\mathcal{P}\mathcal{T}$, we have 
\begin{align}
\begin{split}\left(\tilde{N}_{i}\right)_{kl} & \equiv X_{k}^{\dagger}\,N_{i}\,X_{l}=\overline{\left(\mathcal{P}\mathcal{T}X_{k}\right)^{\dagger}\left(\mathcal{P}\mathcal{T}N_{i}X_{l}\right)}\\
 & =\overline{\left(\mathcal{P}\mathcal{T}X_{k}\right)^{\dagger}\left(N_{i}\mathcal{P}\mathcal{T}X_{l}\right)}=\overline{X_{k}^{\dagger}N_{i}X_{l}}\\
 & =\overline{\left(\tilde{N}_{i}\right)_{kl}}.
\end{split}
\end{align}
This proves part (a). Equation $\tilde{\mathcal{H}}\phi=\omega\phi$ consists of $n$ coupled linear differential equations with dependent variables $\phi=(\phi_{1},\cdots,\phi_{n})^{\mathrm{T}}$. In principle, we could eliminate $n-1$ components of $\phi$ in favor of one $\phi_{l}$, and the resulting governing equation assumes the form of Eq.\,(\ref{real_secular_ode}). Since $\tilde{N}_{i}$ are all real matrices, the coefficient $a_{i}(\omega)$ of the reduced ODE (\ref{real_secular_ode}) must be real-value functions.
\end{proof}
A corollary of Theorem \ref{theorem} is that the coefficients of the characteristic polynomials of $N_{i}$ are real. When $N_{i}=0\,\,(i>0),$ the Hamiltonian does not contain differential operators and can be represented by a complex matrix $N_{0}$. In this special case, Theorem 1 implies that the coefficients of the characteristic polynomial of $N_{0},$ which determines the spectrum of the system, are real \citep{bender2010pt}.

We now return to the governing system for the KH instability (\ref{linearized}). Its Hamiltonian (\ref{H-1}) contains two components, 
\begin{align}
 & \mathcal{H}\equiv N_{0}+N_{1}\dfrac{\partial}{\partial z},\label{H}\\
 & N_{0}=\begin{pmatrix}kv_{0} & -\mathrm{i}v_{0}' & k\\
0 & kv_{0} & 0\\
\dfrac{k}{M^{2}} & 0 & kv_{0}
\end{pmatrix},\quad N_{1}=\begin{pmatrix}0 & 0 & 0\\
0 & 0 & -\mathrm{i}\\
0 & \dfrac{-\mathrm{i}}{M^{2}} & 0
\end{pmatrix}.
\end{align}
It is easy to verify that $N_{0}$ and $N_{1}$ are PT-symmetric for $\mathcal{P}=\mathrm{diag}(1,-1,1)$, i.e., 
\begin{align}
(\mathcal{P}\mathcal{T})N_{i}(\mathcal{P}\mathcal{T})=\mathcal{P}\bar{N_{i}}\mathcal{P}=N_{i},\quad i=0,1.
\end{align}
Thus, $\mathcal{H}$ is indeed PT-symmetric. It follows that the spectrum of the system is symmetric with respect to the real axis, and the KH instability is triggered when and only when PT symmetry is spontaneously broken. An orthonormal basis of the $\mathcal{P}\mathcal{T}$ operator are $(1,0,0)^{\mathrm{T}}$, $(0,\mathrm{i},0)^{\mathrm{T}}$ and $(0,0,1)^{\mathrm{T}}$. According to the proof of Theorem \ref{theorem}, the unitary matrix generating the basis transformation for the state vector is 
\begin{equation}
O=\begin{pmatrix}1 & 0 & 0\\
0 & i & 0\\
0 & 0 & 1
\end{pmatrix},
\end{equation}
and the transformed Hamiltonian is 
\begin{align}
\tilde{\mathcal{H}} & =\tilde{N}_{0}+\tilde{N}_{1}\dfrac{\partial}{\partial z},\\
\tilde{N}_{0} & =\begin{pmatrix}kv_{0} & v_{0}' & k\\
0 & kv_{0} & 0\\
\dfrac{k}{M^{2}} & 0 & kv_{0}
\end{pmatrix},\quad\tilde{N_{1}}=\begin{pmatrix}0 & 0 & 0\\
0 & 0 & -1\\
0 & \dfrac{1}{M^{2}} & 0
\end{pmatrix}.
\end{align}
The matrices $\tilde{N}_{0}$ and $\tilde{N}_{1}$ are real, and the governing system can be reduced to a single ODE satisfying the coefficient condition (\ref{coefficient_condition}). Indeed, from Eq.\,(\ref{linearized}), it is straightforward to eliminate $v_{1x}$ and $p$ in favor of $v_{1z}$ to obtain one single second-order ODE,
\begin{align}
\left(\dfrac{(v_{0}-c)v_{1z}'-v_{0}'v_{1z}}{1-M^{2}(v_{0}-c)^{2}}\right)'-k^{2}(v_{0}-c)v_{1z}=0.\label{vz_ode}
\end{align}
Here, $c=\omega/k$ is the phase velocity. Obviously, condition (\ref{coefficient_condition}) is satisfied for Eq.\,(\ref{vz_ode}). At the $M\to0$ limit, this equation becomes the Rayleigh stability equation. Due to the term $(v_{0}-c)v_{1z}''$, the eigenvalue problem is singular. This leads to singular eigenmodes with continuous spectrum in addition to well-behaved eigenmodes with discrete spectrum \citep{drazin1966hydrodynamic,lees1946investigation}. The continuous spectrum locates at $c=v_{0}$ and is real with logarithmic divergence for the mode structure. The discrete spectrum could be either real or complex.

Historically, Eq.\,(\ref{vz_ode}) had been derived without the knowledge of PT symmetry. Because the coefficients of Eq.\,(\ref{vz_ode}) satisfy condition (\ref{coefficient_condition}), it had been known that the spectrum of the system is symmetric with respect to the real axis. However, the key question as to why the coefficients of Eq.\,(\ref{vz_ode}) satisfy condition (\ref{coefficient_condition}) was never asked. As we have shown here, this property is a consequence of the fact that the system admits PT symmetry.

The physics of PT symmetry can be understood from the perspective of the Lorentz group, the homogeneous symmetry of flat spacetime. The $\mathcal{P}\mathcal{T}$ transformation is an element in the Lorentz group $O(1,3),$ which as a manifold contains 4 connected components, each of which is topologically separated from others. One of the components is the proper, orthochronous Lorentz group $SO^{+}(1,3),$ and $O(1,3)$ is a semi-direct product of $SO^{+}(1,3)$ and the discrete subgroup $\{1,\mathcal{P},\mathcal{T},\mathcal{P}\mathcal{T}\}$, 
\[
O(1,3)=SO^{+}(1,3)\rtimes\{1,\mathcal{P},\mathcal{T},\mathcal{P}\mathcal{T}\},
\]
where $\mathcal{P}=\mathrm{diag}(1,-1,-1,-1)$ and $\mathcal{T}=\mathrm{diag}(-1,1,1,1).$ In general, we expect that physics is invariant with respect to $SO^{+}(1,3)$, but not with respect to $\mathcal{P}$ transformation or $\mathcal{T}$ transformation. The program initiated by Bender is to investigate the interesting physics associated with $\mathcal{P}\mathcal{T}$ transformation \citep{Bender2019}. It was demonstrated that in classical systems governed by Newton's second law, such as the dynamical systems in neutral fluids and plasmas, PT symmetry is a consequence of reversibility \citep{qin2019kelvin,Zhang2020}. When a system is not subject to any dissipation, the dynamics is reversible and admits PT symmetry. Note that this observation is consistent with Bender's characterization of PT symmetry as a mechanism of balanced grain and loss for two coupled subsystems \citep{bender2002complex,bender2002generalized,bender2003must,bender2007making,bender2010pt,bender2013observation}. If the loss of one subsystem is balanced with the gain of the other subsystem, then the whole system is free of dissipation.

For the KH instability investigated, if a viscosity term $\mu\nabla^{2}\boldsymbol{v}$ is included in Eq.\,(\ref{momentum_eq}), the coefficients of the reduced ODE do not satisfy the condition (\ref{coefficient_condition}). In the $M\to0$ limit, the reduced ODE is the Orr-Sommerfeld equation with complex coefficients \citep{criminale2018theory}. In these cases, the spectrum is not symmetric with respect to the real axis. From our analysis above, it is clear now that the physics here is that viscosity renders the system irreversible and explicitly breaks PT symmetry.

In addition to the properties of spectrum, PT-symmetry analysis also leads to more detailed information about the instability previously unknown. The condition of unbroken symmetry is 
\begin{align}
\mathcal{P}\mathcal{T}\begin{pmatrix}v_{1x}\\
v_{1z}\\
p_{1}
\end{pmatrix}=\begin{pmatrix}\bar{v}_{1x}\\
-\bar{v}_{1z}\\
\bar{p}_{1}
\end{pmatrix}=\lambda\begin{pmatrix}v_{1x}\\
v_{1z}\\
p_{1}
\end{pmatrix}
\end{align}
for some $\lambda.$ Therefore, 
\begin{align}
 & \bar{f}=f,\,\,\,\,\bar{g}=-g,\label{pt_condition2}
\end{align}
where $f=v_{1x}/p_{1}$ and $g=v_{1z}/p_{1}.$ Equation (\ref{pt_condition2}) requires that when the system has unbroken PT symmetry, $f$ is real and $g$ is imaginary. It means that if the system is stable, $v_{1z}$ should always have a $\pi/2$ phase difference relative to $v_{1x}$ and $p_{1}$. When the system is unstable, PT symmetry is spontaneously broken and $\mathcal{P}\mathcal{T}\psi=\lambda\psi$ does not hold. Thus, the phase differences between these components become arbitrary. Interestingly, similar effects were also observed in optical systems \citep{ruter2010observation}, where the PT-symmetric system consists of two coupled waveguides. In these experiments, when PT symmetry was unbroken, the phase difference between two waveguides could be an arbitrary value between $[0,\pi]$; when PT symmetry is broken, the phase difference was locked at $\pi/2$.

Now we proceed to demonstrate the PT symmetry and the breaking thereof by numerical examples of the eigenvalues and eigenfunctions of the Hamiltonian $\mathcal{H}$ defined in Eq.\,(\ref{H-1}). The velocity profile is taken to be $v_{0}(z)=\tanh(z)+1$ for $z\in[-2,4]$ \citep{michalke1964inviscid,blumen1970shear}. Rigid-wall boundary conditions are imposed at $z=-2$ and $z=4$. We scan Mach number $M$ from $0.05$ to $0.6$. At each Mach number, wave number $k$ varies from $0.1$ to $1.2$. For a given set of $M$ and $k$, the eigen-frequency $\omega$ and eigenmodes $\psi$ are numerically solved with boundary conditions $v_{1z}=p_{1}'=0$. Two numerical methods are used. For the discrete spectrum with well-behaved mode structure, the standard shooting method can be applied straightforwardly to Eq.\,(\ref{vz_ode}). To solve for the continuous spectrum with logarithmic divergent mode structure, a different algorithm was developed. Instead of solving Eq.\,(\ref{vz_ode}), the Hamiltonian operator $\mathcal{H}$ is discretized and the spectrum of the discretized Hamiltonian recovers that of $\mathcal{H}$ under proper limits. This algorithm is applicable to both well-behaved modes on the discrete spectrum and singular modes on the continuous spectrum \citep{Fu2020}. The numerically calculated stability diagram in the $M$-$k$ plane is shown in Fig.\,\ref{stability_boundary}. In the upper (green) region, the system is stable with unbroken PT symmetry, and in the lower (red) region, the system is unstable with spontaneously broken PT symmetry. For this problem, the system is stable with unbroken PT symmetry on the boundary between upper and lower regions.

\begin{figure}[ht]
\includegraphics[width=0.8\linewidth]{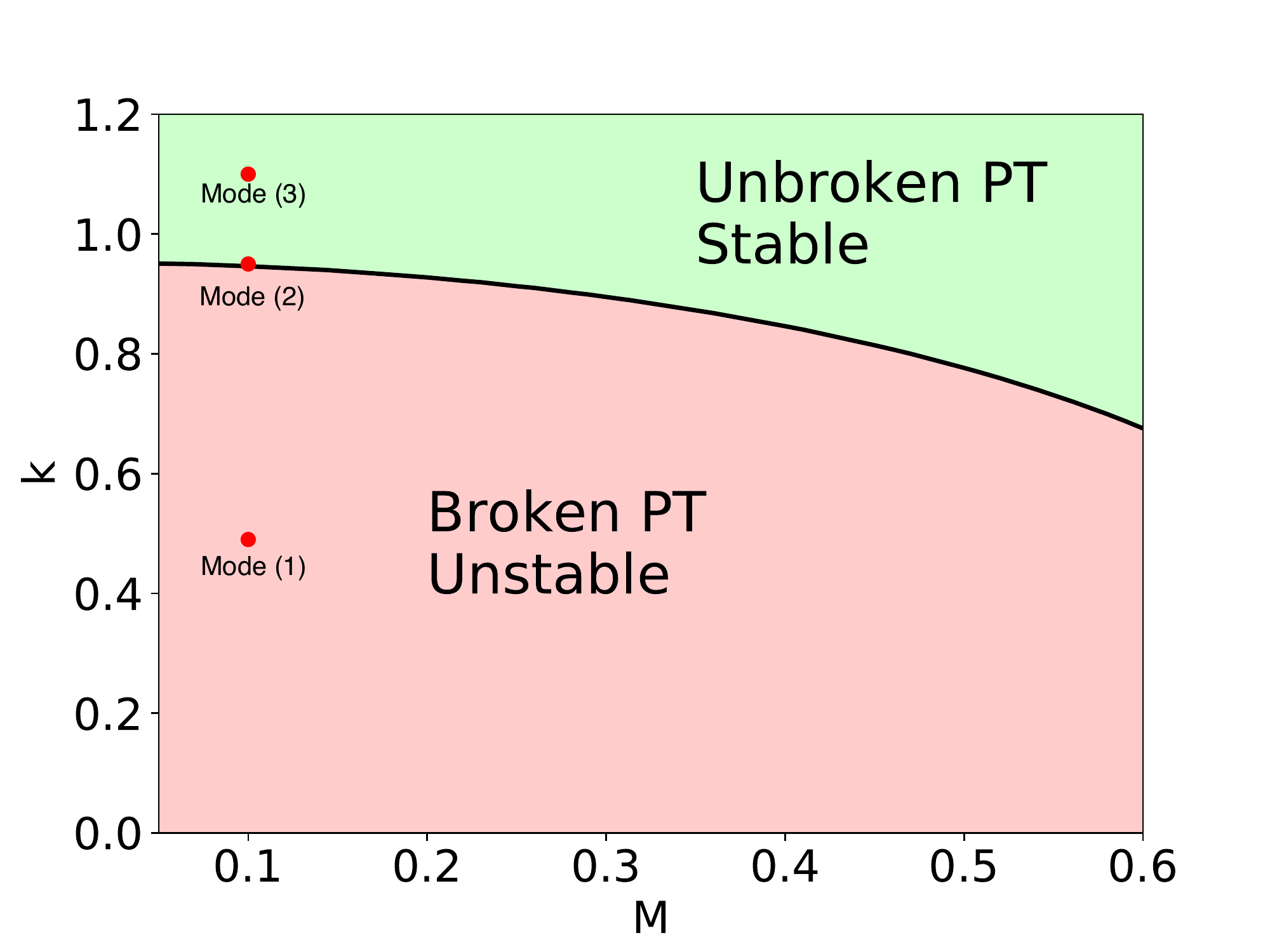} \caption{Stability diagram in the $M$-$k$ plane. Unstable modes are found in the region with spontaneously broken PT symmetry. All modes in the unbroken PT-symmetry region are stable.}
\label{stability_boundary}
\end{figure}

To explicitly verify PT-symmetry breaking as the mechanism for the KH instability, numerical solutions for three sets of parameters marked by red points in the $M$-$k$ plane are obtained. Mode (1) is an unstable mode on the discrete spectrum in the upper (green) region, Mode (2) is a stable mode on the discrete spectrum on the boundary between upper and lower region, and Mode (3) is a stable mode on the continuous spectrum in the lower (red) region. The corresponding functions $f$ and $g$ for each solution are shown in Fig.\,\ref{eigenfunctions}. Modes (2) and (3) are stable and have unbroken PT symmetry. Thus, $f$ is real and $g$ is imaginary as expected. Mode (1) is unstable and PT symmetry is spontaneously broken. Therefore, $f$ and $g$ are complex with both real and imaginary parts and vary as functions of $z$.

\begin{figure}[ht]
\includegraphics[width=0.9\linewidth]{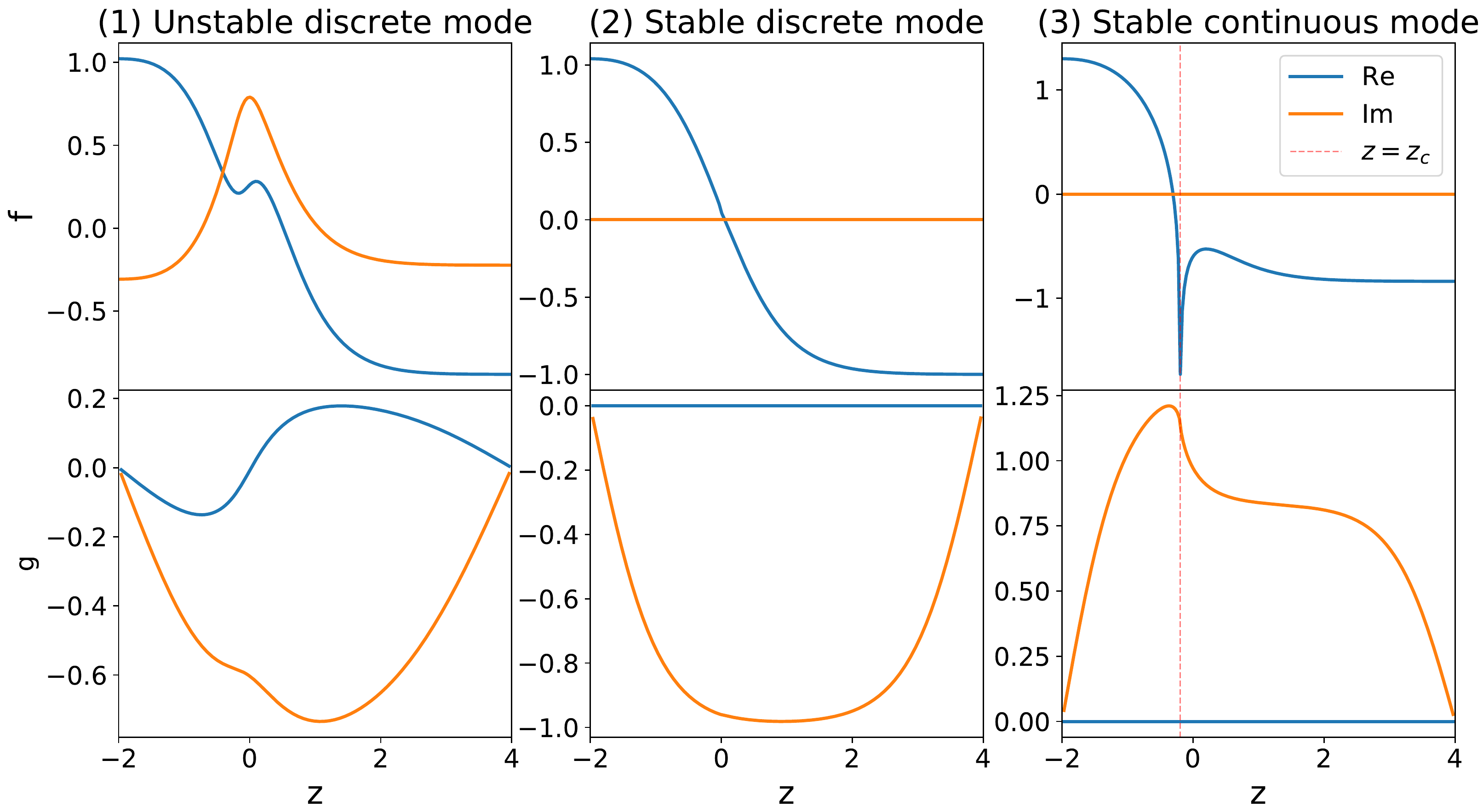} \caption{The imaginary and real parts of $f=v_{1x}/p_{1}$ and $g=v_{1z}/p_{1}$ for the three eigenmodes displayed in Fig.\,\ref{stability_boundary}. (1) $M=0.1$, $k=0.49$, and $\omega=0.457+0.132\mathrm{i}$; (2) $M=0.1$, $k=0.960$, and $\omega=0.958$; (3) $M=0.1$, $k=1.1$, and $\omega=0.885$. Mode (3) is on the continuous spectrum with a logarithmic divergence at $z_{c}=-0.198$.}
\label{eigenfunctions}
\end{figure}

In conclusion, together with \citep{qin2019kelvin}, we have proved that the KH instability is the result of spontaneous PT-symmetry breaking. The discovery of PT symmetry in the KH instability provides a new perspective in the study of classical instabilities in conservative systems. The PT-symmetry analysis for matrix differential operators developed in the present study is applicable to a broader range of systems. We expect that all classical conservative systems are PT-symmetric, and spontaneous PT-symmetry breaking is a generic mechanism for the onset of instabilities in these systems.
\begin{acknowledgments}
This research was supported by the U.S. Department of Energy (DE-AC02-09CH11466). We thank Hongxuan Zhu for fruitful discussions.
\end{acknowledgments}

\bibliographystyle{apsrev4-1}
\bibliography{PT_KH.bbl}

\end{document}